\newtheorem{definition}{Definition}
\newtheorem{theorem}{Theorem}
\begin{document}
\title{Deterministic Search on Complete Bipartite Graphs by Continuous Time Quantum Walk}

\author{Honghong Lin, Yun Shang\footnote{shangyun@amss.ac.cn}}
\maketitle
\begin{abstract}
This paper presents a deterministic search algorithm on complete bipartite graphs. Our algorithm adopts the simple form of alternating iterations of an oracle and a continuous-time quantum walk operator, which is a generalization of Grover's search algorithm. The success probabilities of previous works on spatial search on bicliques are strictly less than 1 or can only reach 1 asymptotically. We address the general case of multiple marked states, so there is a problem of estimating the number of marked states. To this end, we construct a quantum counting algorithm based on the spectrum structure of the search operator. This is a nontrivial example of quantum counting for spatial search. To implement the continuous time quantum walk operator, we perform Hamiltonian simulation in the quantum circuit model, which is achieved in constant time, that is, the complexity of the quantum circuit does not scale with the evolution time.
\end{abstract}



\section{Introduction}

Quantum walk is a fundamental and powerful tool for designing quantum algorithms that outperform the classical counterparts, such as searching~\cite{1,2}, element distinctness~\cite{3}, and graph isomorphism~\cite{4}. It has been demonstrated that they are universal for quantum computing~\cite{5,6} and can be implemented in a variety of physical systems~\cite{7,8,9}.

There are two types of quantum walks: discrete- and continuous-time. In this work we focus on the continuous-time quantum walk, which is a natural analog of the classical continuous-time random walk. Given a graph $G = (V,E)$ with $|V|=n$, let $p_j(t)$ be the probability associated with vertex $j$ at time $t$ and $P(t) = (p_1(t),\dots,p_n(t))^T$. A continuous-time random walk over $G$ is a stochastic Markovian process that evolves as per the master equation
\begin{equation}\label{master}
\frac{\rm d}{{\rm d}t}P(t) = LP(t),
\end{equation}
which can be seen as a discrete analog of the diffusion equation $\frac{1}{k}\frac{\partial u}{\partial t} = \Delta u$. Here $L$ is the discrete Laplacian of the graph (a discrete approximation of the Laplacian operator $\Delta$ in the continuum). The solution of this differential equation can be given in closed form as 
\begin{equation}
P(t) = {\rm e}^{Lt}P(0).
\end{equation}
Note that \eqref{master} closely resembles the Schr\"{o}dinger equation
\begin{equation}\label{Schr?dinger}
i\frac{\rm d}{{\rm d}t}|\psi\rangle = H|\psi\rangle,
\end{equation}
except that (\ref{master}) lacks the factor of $i$. A continuous-time quantum walk is then defined as the unitary process that evolves as per the Schr\"{o}dinger equation (\ref{Schr?dinger}) under any Hermitian Hamiltonian that respects the structure of $G$. We choose $H$ to be the adjacency matrix of the graph in this paper and the solution of equation \eqref{Schr?dinger} can be given in closed form as
\begin{equation}
|\psi(t)\rangle = {\rm e}^{-iAt}|\psi(0)\rangle.
\end{equation}

In this work, we employ continuous-time quantum walk to construct a deterministic spatial search algorithm. Search problem is one of the most important problems in computer science. Many of the computation tasks can be reduced to search problems or invoke search algorithms as a subroutine. One early milestone in quantum computation is Grover's celebrated algorithm for search in an unstructured database~\cite{10}. Since then, extensive research has been conducted to enhance and generalize Grover's algorithm. There are two avenues of investigation. One is to de-randomize the original algorithm, making the search exact~\cite{11,12}. Since error probability becomes negligible only when the database becomes sufficiently large, determinacy is vital in problems where the search space is relatively small. Moreover, de-randomization holds inherent value in itself. The other is to search on more general graphs, since Grover's unstructured search can be formulated as quantum walk search on a complete graph~\cite{13,14}.

We pursue both of these two directions here, searching on graphs and searching with certainty. Every graph $G$ of order $m+n$ can be regarded as an interdependent network with a block adjacency matrix
\begin{equation}
A = 
\begin{pmatrix}
A_1 & B \\
B^T & A_2
\end{pmatrix},
\end{equation}
where $A_1$ is the $m \times m$ adjacency matrix of the graph $G_1$ with $m$ vertices, $A_2$ is the $n \times n$ adjacency matrix of the graph $G_2$ with $n$ vertices, and $B$ is the $m \times n$ matrix interconnecting $G_1$ and $G_2$. Among the published papers thus far, graphs on which a deterministic search algorithm exist consist only of complete graphs (with a self-loop at each vertex)~\cite{14}, star graphs~\cite{15} and $2 \times n$ Rook graph ($K_2 \square K_n = K_2 \otimes I_n+I_2 \otimes K_n$)~\cite{16}. In this paper, we propose a deterministic algorithm for search on a complete bipartite graph with multiple marked vertices. Equations (\ref{complete}-\ref{complete_bipartite}) represent the adjacency matrices of these three types of graphs, which shows that they all have a certain completeness property (indicated by the all-1 submatrix $J_{m \times n}$). We believe that this completeness property is useful and crucial for the construction of a deterministic search algorithm on graphs, principally because it enables one to identify a small-dimensional invariant subspace in which calculation can possibly be done, and more importantly, the spectrum of the reduced search operator possesses sufficient symmetries.

Complete Graph: 
\begin{equation}\label{complete}
A_{K_{m+n}} = 
\begin{pmatrix}
J_{m \times m} & J_{m \times n} \\
J_{n \times m} & J_{n \times n}
\end{pmatrix}+
\begin{pmatrix}
I_m & \bf0\\
\bf0& I_n
\end{pmatrix}
\end{equation}

$2 \times n$ Rook Graph:
\begin{equation}\label{Rook}
A_{K_2 \square K_n} = 
\begin{pmatrix}
J_{n \times n} & \bf0 \\
\bf0 & J_{n \times n}
\end{pmatrix}+
\begin{pmatrix}
-I_n & I_n \\
I_n & -I_n
\end{pmatrix}
\end{equation}

Complete Bipartite Graph: 
\begin{equation}\label{complete_bipartite}
A_{K_{m,n}} = 
\begin{pmatrix}
\bf{0} & J_{m \times n} \\
J_{n \times m} & \bf{0}
\end{pmatrix}
\end{equation}

Search on complete bipartite graphs has been explored in different cases. Wong et al.~\cite{17} studied the distinction between Laplacian and adjacency matrix in continuous-time quantum walk search by analysing search on the complete partite graphs, while Rhodes \& Wong~\cite{18} investigated how discrete-time quantum walk search the complete bipartite graphs with different initial states. The calculation of the algorithmic parameter (runtime) of these two works varies according to the size of the two partite sets and the arrangement of the marked vertices. As shown in section \ref{search} (theorem \ref{search theorem}), our algorithm does not depend on a case by case analysis. Besides, the success probabilities of these algorithms are strictly less than 1 or only reach 1 asymptotically.

Qu et al. \cite{15} gave a deterministic search algorithm on star graphs with a single marked vertex, which is a special case of the complete bipartite graphs. In comparison, our search scheme is simpler, comprised of iterations of an oracle followed by a continuous time quantum walk operator. Moreover, we consider the general case of multiple marked vertices, therefore necessitating an estimation of the number of marked vertices. Approximate counting is one of the most fundamental problems in computer science. It has applications, for example, to the solution of $\bf NP$-complete problems, which may be phrased in terms of the existence of a solution to a search problem~\cite{19}. Recently, Bezerra et al.~\cite{20} presented a quantum counting algorithm on complete bipartite graphs based on discrete-time quantum walk. The quantum walk dynamics has an eight-dimensional invariant subspace and the eigenvalues of the reduced operator separate into two nonequivalent classes. Our search operator has a simpler spectrum structure, consisting of -1 and a conjugate pair. Since the conjugate eigenvalues encode the information of the number of marked vertices, we employ quantum phase estimation to estimate the number of marked vertices, which is a nontrivial generalization of the quantum counting algorithm for Grover's unstructured search.

In section \ref{search}, we present our main result, the deterministic search algorithm on complete bipartite graphs. In section \ref{counting}, we give the quantum counting algorithm on complete bipartite graphs based on our search operator. In section \ref{simulation}, we implement the search algorithm in quantum circuit model by simulating the walk operator. We conclude with general discussion of the problems considered here in section \ref{discussion}.

\section{Deterministic Spatial Search Algorithm}\label{search}

Grover's original search algorithm is the iteration of the operator $G = (I-2|\psi\rangle\langle\psi|)O$, where $O$ denotes the oracle and $|\psi\rangle$ is the uniform superposition of all states. One of the deterministic versions of Grover's search is to introduce an arbitrary phase to the reflection $(I-2|\psi\rangle\langle\psi|)$, obtaining a generalized search operator $G(\alpha) = (I-(1-{\rm e}^{i\alpha})|\psi\rangle\langle\psi|)O$~\cite{21}. By choosing an appropriate phase $\alpha$ and iteration number, one can obtain a final state that is the superposition of all marked vertices. Our search operator is a further generalization, replacing the Grover diffusion operator $(I-2|\psi\rangle\langle\psi|)$ with a quantum walk operator on complete bipartite graphs. By setting an appropriate walk time and iteration number, the system will reach a final state of superposition of all marked states.

For a complete bipartite graph $K_{m,n}$ with $m$, $n$ vertices in the two partite sets respectively, the Hilbert space in which the search takes place is span$\{ |0\rangle, |1\rangle, \dots, |m+n-1\rangle \}$. We assume the marked vertices only occur in, say, the order-$n$ part of the graph and denote the uniform superposition of the $k$ marked vertices as $|w\rangle = \frac{1}{\sqrt{k}}\sum\limits_{i=m}^{m+k-1} |i\rangle$. This assumption is equivalent to say that the oracle is local, acting on the subspace corresponding to the order-$n$ part:
\begin{eqnarray}\label{local_oracle}
O = 
\left\{
\begin{aligned}
& I, \,\text{in span}\{ |0\rangle,\dots,|m-1\rangle \}, \\
& 2|w \rangle\langle w|-I, \,\text{in span}\{ |m\rangle,\dots,|m+n-1\rangle \}.
\end{aligned}
\right.
\end{eqnarray}
(Note that the definition differs from the standard definition by a phase -1.) In fact, such an oracle can be simulated by a standard global oracle. The standard oracle $\widetilde{O}$ is defined by making use of an oracle qubit $|q\rangle$:
\begin{equation}
|i\rangle|q\rangle \overset{\widetilde{O}}{\to} |i\rangle|q \oplus f(i)\rangle,
\end{equation}
$f(i)$ equal 1 if vertex $i$ is marked and equal 0 otherwise. If the oracle qubit is initially in the state $|-\rangle = \frac{|0\rangle-|1\rangle}{\sqrt{2}}$, then the action of the oracle is
\begin{equation}
|i\rangle|-\rangle \overset{\widetilde{O}}{\to} (-1)^{f(x)}|i\rangle|-\rangle,
\end{equation}
leaving the oracle qubit unchanged. We append an ancillary qubit $|b\rangle$ to indicate which part a state belongs to: $|b=0\rangle|v\rangle$ means that $|v\rangle$ is in the order-$m$ part and $|b=1\rangle|v\rangle$ means that $|v\rangle$ is in the order-$n$ part. This is possible since the underlying graph is assumed to be known. Let $C_jZ$ be the Z gate controlled by the ancilla being in state $|b=j\rangle$ with target on the oracle qubit. The local oracles are implemented by
\begin{equation}
O_j = -C_jZ \cdot \widetilde{O} \cdot C_jZ
\end{equation}
with the oracle qubit initiated in $|+\rangle = \frac{|0\rangle+\rangle}{\sqrt{2}}$ state~\cite{21}. $O_1$ is then the oracle in (\ref{local_oracle}) and the corresponding circuit is shown in Figure~\ref{oracle}.
\begin{figure}[!t]
\centering
\includegraphics[width=7cm]{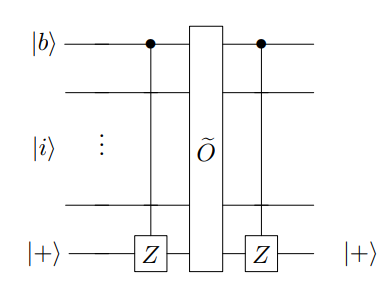}
\caption{Quantum circuit implementation for the oracle in (\ref{local_oracle}).}
\label{oracle}
\end{figure}
In the general case of marked vertices distributing in both partite sets, one can search in two steps, first for the marked vertices in one part (with the corresponding local oracle) and then the other part. Our quantum walk search operator is
\begin{equation}
U(t) = {\rm e}^{-iAt}O,
\end{equation}
where $A = A_{K_{m,n}}$ is the adjacency matrix of the complete bipartite graph.

Let $|s\rangle = \frac{1}{\sqrt{m}} \sum\limits_{i=0}^{m-1} |i\rangle$ be the uniform superposition of states corresponding to the order-$m$ part, and $| \overline{w} \rangle = \frac{1}{\sqrt{n-k}} \sum\limits_{i=m+k}^{m+n-1} |i\rangle$ be the complement of $|w\rangle$ in the order-$n$ part. These three states group together vertices that evolve identically by symmetry, thus $\text{span}\{ |s\rangle, |w\rangle, |\overline{w}\rangle \}$ is an invariant subspace of $A$, with the reduced adjacency matrix in this subspace being
\begin{equation}
A=
\begin{pmatrix}
0             & \sqrt{mk} & \sqrt{m(n-k)} \\
\sqrt{mk}     & 0         & 0             \\
\sqrt{m(n-k)} & 0         & 0             
\end{pmatrix}.
\end{equation}
This matrix has eigenvalues 0, $\pm\sqrt{mn}$ with eigenvectors
\begin{equation}
\begin{aligned}
& |v_{0}\rangle = \sqrt{\frac{n-k}{n}}|w\rangle - \sqrt{\frac{k}{n}}|\overline{w}\rangle, \\
& |v_{\pm\sqrt{mn}}\rangle = \frac{1}{\sqrt{2}}|s\rangle \pm \sqrt{\frac{k}{2n}}|w\rangle \pm \sqrt{\frac{n-k}{2n}}|\overline{w}\rangle,
\end{aligned}
\end{equation}
where the subscripts are the corresponding eigenvalues.

From this spectrum decomposition, we have
\begin{equation}
\begin{aligned}
{\rm e}^{-iAt}
& = |v_0 \rangle\langle v_0| + {\rm e}^{-it\sqrt{mn}}|v_{\sqrt{mn}} \rangle\langle v_{\sqrt{mn}}| + {\rm e}^{it\sqrt{mn}}|v_{-\sqrt{mn}} \rangle\langle v_{-\sqrt{mn}}| \\
& = 
\begin{pmatrix}
\cos(\sqrt{mn}t) & -i\sqrt{\frac{k}{n}}\sin(\sqrt{mn}t) & -i\sqrt{\frac{n-k}{n}}\sin(\sqrt{mn}t) \\
-i\sqrt{\frac{k}{n}}\sin(\sqrt{mn}t) & \frac{n-k}{n}+\frac{k}{n}\cos(\sqrt{mn}t) & -\frac{\sqrt{k(n-k)}}{n}(1-\cos(\sqrt{mn}t)) \\
-i\sqrt{\frac{n-k}{n}}\sin(\sqrt{mn}t) & -\frac{\sqrt{k(n-k)}}{n}(1-\cos(\sqrt{mn}t)) & \frac{k}{n}+\frac{n-k}{n}\cos(\sqrt{mn}t)
\end{pmatrix}.
\end{aligned}
\end{equation}
It follows that
\begin{equation}
U(t) = 
\begin{pmatrix}
\cos(\sqrt{mn}t) & -i\sqrt{\frac{k}{n}}\sin(\sqrt{mn}t) & i\sqrt{\frac{n-k}{n}}\sin(\sqrt{mn}t) \\
-i\sqrt{\frac{k}{n}}\sin(\sqrt{mn}t) & \frac{n-k}{n}+\frac{k}{n}\cos(\sqrt{mn}t) & \frac{\sqrt{k(n-k)}}{n}(1-\cos(\sqrt{mn}t)) \\
-i\sqrt{\frac{n-k}{n}}\sin(\sqrt{mn}t) & -\frac{\sqrt{k(n-k)}}{n}(1-\cos(\sqrt{mn}t)) & -\frac{k}{n}-\frac{n-k}{n}\cos(\sqrt{mn}t)
\end{pmatrix}.
\end{equation}

Solving the characteristic polynomial ${\rm det}(\lambda I - U) = (\lambda+1)(\lambda^2-[\frac{n-2k}{n}+\frac{2k}{n}\cos(\sqrt{mn}t)]\lambda+1)$, we obtain the eigenphases
\begin{equation}\label{P}
\pi, \,\theta_{\pm} = \pm 2\arcsin\left(\sqrt{\frac{k}{n}}\sin\big(\frac{\sqrt{mn}t}{2}\big)\right).
\end{equation}
The corresponding eigenvectors are
\begin{equation}\label{eigenvector}
\begin{aligned}
& |v_{-1}\rangle = \frac{1}{N} \Big(-i\sqrt{\frac{n-k}{n}}\sin\big(\frac{\sqrt{mn}t}{2}\big) |s\rangle + \cos\big(\frac{\sqrt{mn}t}{2}\big) |\overline{w}\rangle\Big), \\
& |v_{\pm}\rangle = \frac{1}{N} \Big(\frac{1}{\sqrt{2}}\cos\big(\frac{\sqrt{mn}t}{2}\big)|s\rangle \mp  \frac{1}{\sqrt{2}}\sqrt{1-\frac{k}{n}\sin^2\big(\frac{\sqrt{mn}t}{2}\big)}|w\rangle - i\frac{1}{\sqrt{2}}\sqrt{\frac{n-k}{n}}\sin\big(\frac{\sqrt{mn}t}{2}\big)|\overline{w}\rangle\Big),
\end{aligned}
\end{equation}
where $N = \frac{1}{\sqrt{1-\frac{k}{n}\sin^2(\frac{\sqrt{mn}t}{2})}}$ is the normalization factor. Define matrices $V = (|v_{-1}\rangle\, |v_+\rangle\, |v_-\rangle)$ and $D = {\rm diag}\{{\rm e}^{i\pi}, {\rm e}^{i\theta_+}, {\rm e}^{i\theta_-}\}$, then
\begin{equation*}
U^l = V D^l V^{\dagger}.
\end{equation*}

Since $|s\rangle$ is not the superposition of all states, we perform a preprocessing step, evolving $|s\rangle$ through the walk operator for time $\frac{t}{2}$ (the $\frac{1}{2}$ factor is chosen to facilitate the computation below):
\begin{equation}
\begin{aligned}
{\rm e}^{-iA\frac{t}{2}}|s\rangle = & \cos\big(\frac{\sqrt{mn}t}{2}\big)|s\rangle - i\sqrt{\frac{k}{n}}\sin\big(\frac{\sqrt{mn}t}{2}\big)|w\rangle - i\sqrt{\frac{n-k}{n}}\sin\big(\frac{\sqrt{mn}t}{2}\big)|\overline{w}\rangle.
\end{aligned}
\end{equation}
Using this state as the initial state, we evolve it through the unitary $U(t)$, then by an appropriate choice of the time $t$ and the iteration number $l$, we hope that the final state $U^l{\rm e}^{-iA\frac{t}{2}}|s\rangle$ have a unit overlap with $|w\rangle$.

We evaluate the value of $\langle w|U^l{\rm e}^{-iA\frac{t}{2}}|s\rangle$ by using the matrix representation of the states and operators in the subspace:
\begin{equation}
  \begin{aligned}
  \langle w|U^l{\rm e}^{-iA\frac{t}{2}}|s\rangle = &(0, 1, 0)V D^l V^{\dagger}{\rm e}^{-iA\frac{t}{2}}|s\rangle \\
  = &\frac{1}{N}\left(0, -\frac{1}{\sqrt{2}}, \frac{1}{\sqrt{2}}\right) D^l V^{\dagger}{\rm e}^{-iA\frac{t}{2}}|s\rangle \\
  = &\frac{1}{N}\left(0, -\frac{1}{\sqrt{2}}{\rm e}^{il\theta}, \frac{1}{\sqrt{2}}{\rm e}^{-il\theta}\right) V^{\dagger}{\rm e}^{-iA\frac{t}{2}}|s\rangle \\
  = &\left(-i\frac{1}{N}\cos(\frac{\sqrt{mn}t}{2})\sin(l\theta), \cos(l\theta), \frac{1}{N}\sqrt{\frac{n-k}{n}}\sin(\frac{\sqrt{mn}t}{2})\sin(l\theta)\right) {\rm e}^{-iA\frac{t}{2}}|s\rangle \\
  = &\sqrt{1-\frac{k}{n}\sin^2(\frac{\sqrt{mn}t}{2})}\sin(l\theta) + \sqrt{\frac{k}{n}}\sin(\frac{\sqrt{mn}t}{2})\cos(l\theta).
  \end{aligned}
\end{equation}

Here we make a variable substitution
\begin{equation}\label{substitution}
\sin(\frac{\sqrt{mn}t}{2}) = \sqrt{\frac{n}{k}}\sin(\frac{\pi}{2}x),
\end{equation}
it follows that $\sqrt{1-\frac{k}{n}\sin^2(\frac{\sqrt{mn}t}{2})} = \cos(\frac{\pi}{2}x)$ and
\begin{equation}
\theta = 2\arcsin(\sqrt{\frac{k}{n}}\sin(\frac{\sqrt{mn}t}{2})) = \pi x.
\end{equation}
The overlap now is
\begin{equation}
\begin{aligned}
  \langle w|U^l{\rm e}^{-iA\frac{t}{2}}|s\rangle = & \sin(l\theta)\cos(\frac{\pi}{2}x)+\cos(l\theta)\sin(\frac{\pi}{2}x) \\
  = &\sin\Big(l\theta+\frac{\pi}{2}x\Big) \\
  = & \sin\Big[(2l+1)\frac{\pi}{2}x\Big].
\end{aligned}
\end{equation}
Therefore, setting $x = \frac{1}{2l+1}$, or equivalently,
\begin{equation}\label{t}
\begin{aligned}
t = & \frac{2}{\sqrt{mn}}\arcsin\Big[\sqrt{\frac{n}{k}}\sin\big(\frac{\pi}{2}x\big)\Big] \\
  = & \frac{2}{\sqrt{mn}}\arcsin\Big[\sqrt{\frac{n}{k}}\sin\Big(\frac{\pi}{2(2l+1)}\Big)\Big],
\end{aligned}
\end{equation}
we have $\langle w|U^l{\rm e}^{-iA\frac{t}{2}}|s\rangle = 1$. However, $\sqrt{\frac{n}{k}}\sin(\frac{\pi}{2}x) \le 1$ from equation (\ref{substitution}), the requirement on $l$ is
\begin{equation}\label{l}
l \ge \frac{\pi}{4}\sqrt{\frac{n}{k}}-\frac{1}{2}.
\end{equation}

We formalize the above result into a theorem:
\begin{theorem}\label{search theorem}
Let $K_{m,n}$ be a complete bipartite graph with $k$ marked vertices on the order-$n$ part and $A$ be its adjacency matrix. For any $l \ge \frac{\pi}{4}\sqrt{\frac{n}{k}}-\frac{1}{2}$ and $t = \frac{2}{\sqrt{mn}}\arcsin\Big[\sqrt{\frac{n}{k}}\sin\Big(\frac{\pi}{2(2l+1)}\Big)\Big]$, starting from the state $|s\rangle = \frac{1}{m}\sum_{i=0}^{m-1}|i\rangle$, algorithm $U^l(t){\rm e}^{-iA\frac{t}{2}}$ returns a final state of uniform superposition of all marked vertices with certainty.
\end{theorem}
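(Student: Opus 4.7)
The plan is to reduce the $(m+n)$-dimensional evolution to a $3 \times 3$ matrix computation by exploiting permutation symmetry, and then to find a parameter choice $(t,l)$ that forces the overlap with $|w\rangle$ to equal $1$.

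First I would verify that $\mathcal{S}=\operatorname{span}\{|s\rangle,|w\rangle,|\overline{w}\rangle\}$ is invariant under both $A_{K_{m,n}}$ and the local oracle $O$. This follows from the fact that any permutation of $\{0,\dots,m-1\}$, of the marked vertices, or of the unmarked vertices in the order-$n$ part, preserves both the adjacency structure of $K_{m,n}$ and the oracle. Since $|s\rangle$, $\mathrm{e}^{-iAt/2}|s\rangle$, and $|w\rangle$ all lie in $\mathcal{S}$, the entire analysis can be carried out with the $3\times 3$ reduced matrices written in the excerpt.

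Second, I would diagonalize $A$ restricted to $\mathcal{S}$ (eigenvalues $0,\pm\sqrt{mn}$) to exponentiate it, compose with $O$ to obtain $U(t)$, and then diagonalize $U(t)$ itself. The trace and determinant fix the characteristic polynomial, yielding eigenvalues $-1$ and $\mathrm{e}^{\pm i\theta}$ with $\theta=2\arcsin(\sqrt{k/n}\sin(\sqrt{mn}t/2))$; the eigenvectors $|v_{-1}\rangle$, $|v_\pm\rangle$ follow by direct substitution. Writing $V=(|v_{-1}\rangle\;|v_+\rangle\;|v_-\rangle)$ and $D=\mathrm{diag}(-1,\mathrm{e}^{i\theta},\mathrm{e}^{-i\theta})$, one has $U^l=VD^lV^\dagger$.

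Third, I would compute the overlap $\langle w|U^l\mathrm{e}^{-iAt/2}|s\rangle$ by successive row-by-row matrix multiplication. Because $\langle w|v_{-1}\rangle=0$, the $-1$ eigenphase drops out and only the $\pm\theta$ contributions survive. After trigonometric simplification the overlap takes the form $\sqrt{1-(k/n)\sin^2(\sqrt{mn}t/2)}\sin(l\theta)+\sqrt{k/n}\sin(\sqrt{mn}t/2)\cos(l\theta)$. I would then introduce the substitution $\sin(\sqrt{mn}t/2)=\sqrt{n/k}\sin(\pi x/2)$, which simultaneously turns the square-root prefactor into $\cos(\pi x/2)$ and yields $\theta=\pi x$, collapsing the overlap to $\sin((2l+1)\pi x/2)$. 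Choosing $x=1/(2l+1)$ makes this equal to $1$, and inverting the substitution reproduces the stated expression for $t$.

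The main (and essentially only) obstacle is the feasibility of the substitution: one needs $\sqrt{n/k}\sin(\pi x/2)\le 1$, and with $x=1/(2l+1)$ this is precisely equivalent to $l\ge \frac{\pi}{4}\sqrt{n/k}-\frac{1}{2}$, recovering the hypothesis on $l$. Finally, since the evolved state is a unit vector in $\mathcal{S}$ whose overlap with the unit vector $|w\rangle$ equals $1$, it must coincide with $|w\rangle$ up to an irrelevant global phase, which is the asserted uniform superposition over marked vertices.
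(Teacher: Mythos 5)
Your proposal is correct and follows essentially the same route as the paper: reduction to the invariant subspace $\operatorname{span}\{|s\rangle,|w\rangle,|\overline{w}\rangle\}$, diagonalization of $U(t)$ with eigenphases $\pi,\pm\theta$, the overlap computation collapsing to $\sin\big((2l+1)\tfrac{\pi}{2}x\big)$ via the substitution $\sin(\sqrt{mn}\,t/2)=\sqrt{n/k}\sin(\pi x/2)$, and the feasibility constraint yielding the bound on $l$. No meaningful differences from the paper's own derivation.
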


\section{Quantum Counting for Spatial Search}\label{counting}

The algorithmic parameters $l$ and $t$ depend on the number of marked vertices $k$, as indicated by equations (\ref{l}) and (\ref{t}). Therefore, in the general case of multiple marked vertices, it is necessary to estimate $k$ before applying the algorithm.

In the context of unstructured search, the initial state may be re-expressed as
\begin{equation}
|\psi\rangle = \sqrt{\frac{N-k}{N}}|\alpha\rangle + \sqrt{\frac{k}{N}}|\beta\rangle,
\end{equation}
where $|\alpha\rangle = \frac{1}{\sqrt{N-k}} \sum\limits_{i \text{\,unmarked}}|i\rangle$ and $|\beta\rangle = \frac{1}{\sqrt{k}} \sum\limits_{i \text{\,marked}}|i\rangle$. The Grover operator $G = (I-2|\psi\rangle\langle\psi|)O$ is a rotation operator of angle $\theta$ with eigenvalues ${\rm e}^{i\theta}$ and ${\rm e}^{i(2\pi-\theta)}$. Then it is possible to estimate the number of marked vertices with the phase estimation technique based on the observation that 
\begin{equation}\label{counting_Grover}
\sin^2\left(\frac{\theta}{2}\right) = \sin^2\left(\frac{2\pi-\theta}{2}\right) = \frac{k}{N}
\end{equation}

Apparently, this algorithm cannot be directly applied due to the different settings. Fortunately, the special spectrum structure of the algorithmic operator $U(t)$ allows us to construct a simple quantum counting algorithm by employing quantum phase estimation in a similar fashion. 

\begin{algorithm}
\floatname{algorithm}{Algorithm}
\renewcommand{\algorithmicrequire}{\textbf{Input:}}
\renewcommand{\algorithmicensure}{\textbf{Output:}}
\footnotesize
\caption{Quantum Phase Estimation (QPE)}
\label{QPE}
\begin{algorithmic}[1]
\REQUIRE a unitary $U$, initial state$|0\rangle^{\otimes p}|\phi_0\rangle$ ($p$ is the number of qubits in the first register and $|\phi_0\rangle$ is an arbitrary state of the second register).
\ENSURE $\tilde{\theta}$ (an estimate of one of the eigenphases of $U$).

\STATE Apply Hadamard gate $H$ to each qubit in the first register;
\STATE Apply $U^{2^j}$ on the second register controlled by qubit $j$, $1 \le j \le p$ in the first register;
\STATE Apply inverse quantum Fourier transform to the first register;
\STATE Measure the first register in the computational basis (assume the result is $l$);
\STATE \textbf{return} $\tilde{\theta} = 2\pi\frac{l}{M} (M=2^p)$.
\end{algorithmic}
\end{algorithm}

We adapt the notation in \cite{21} and give the following definition:
\begin{definition}
For any integer $M > 0$ and real number $0 \le \omega < 1$, let
\begin{equation*}
|S_M(\omega)\rangle = \frac{1}{\sqrt{M}}\sum_{y=0}^{M-1} {\rm e}^{2\pi i\omega y}|y\rangle.
\end{equation*}
We then have, for integers $0 \le j \le M-1$, the Fourier basis $F_M|j\rangle = |S_M(\frac{j}{M})\rangle$ for the space ${\rm span}\{|j\rangle, 0 \le j \le M-1\}$.
\end{definition}

We express $|\phi_0\rangle = \sum \alpha_j|\theta_j\rangle$ in the eigenbasis of $U$ and with probability $|\alpha_j|^2$, the state of the second register would be $|\theta_j\rangle$. After step 2 of the algorithm, the state of the first register will be
\begin{align}
\frac{1}{\sqrt{M}}\sum_{l=0}^{M-1} {\rm e}^{il\theta_j}|l\rangle & = |S_M(\frac{\theta_j}{2\pi})\rangle \\
& = \sum_{l=0}^{M-1} f_lF_M|l\rangle, \label{F}
\end{align}
provided that the state in the second register is $|\theta_j\rangle$ (with probability $|\alpha_j|^2$). We expand the state in Fourier basis in equation \eqref{F}. Step 3 performs the inverse Fourier transform, thus with probability $|f_l|^2$, step 4 will measure the result $l$. If $|\phi_0\rangle$ is one of the eigenvectors $|\theta_j\rangle$ of the unitary $U$, the following theorem from \cite{21} ensures that $\tilde{\theta} = 2\pi\frac{l}{M}$ is a good estimate of $\theta_j$ with high probability.
\begin{theorem}\label{probability}
If $\frac{\theta}{2\pi}M$ is an integer, then
\begin{equation}
{\rm Prob}\Big(l = \frac{\theta}{2\pi}M\Big) = 1.
\end{equation}
Otherwise,
\begin{equation}
\begin{aligned}
& {\rm Prob}\Big(l = \Big\lfloor \frac{\theta}{2\pi}M \Big\rfloor\Big) + {\rm Prob}\Big(l = \Big\lceil \frac{\theta}{2\pi}M \Big\rceil\Big) \\
= & \frac{\sin^2[M\Delta_1\pi]}{M^2\sin^2[\Delta_1\pi]} + \frac{\sin^2[M\Delta_2\pi]}{M^2\sin^2[\Delta_2\pi]} \ge \frac{8}{\pi^2}
\end{aligned}
\end{equation}
where $\Delta_1 = \frac{1}{M}(\frac{\theta_j}{2\pi}M-\lfloor \frac{\theta}{2\pi}M \rfloor)$, $\Delta_2 = \frac{1}{M}(\lceil \frac{\theta}{2\pi}M \rceil-\frac{\theta_j}{2\pi}M)$.
\end{theorem}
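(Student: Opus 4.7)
The plan is to make the post-QFT amplitudes $f_l$ explicit by exploiting the identity $F_M|l\rangle = |S_M(l/M)\rangle$, read off a clean formula for the measurement probabilities, and then bound the two nearest-bin probabilities using the constraint $\Delta_1+\Delta_2=1/M$.

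First I would rewrite $f_l$ in \eqref{F} as $f_l = \langle S_M(l/M)\,|\,S_M(\theta_j/(2\pi))\rangle$. Setting $\delta_l := \theta_j/(2\pi) - l/M$, the resulting geometric sum evaluates to
\[
|f_l|^2 = \frac{\sin^2(\pi M \delta_l)}{M^2 \sin^2(\pi \delta_l)}.
\]
This single formula handles both claims. When $\theta M/(2\pi) = l_0$ is an integer, taking $l=l_0$ gives $\delta_{l_0}=0$, so $|f_{l_0}|^2 = 1$ (directly summing $M$ unit terms), forcing the deterministic outcome; all other $l$ have $\sin(\pi M\delta_l)=0$ since $M\delta_l$ is a nonzero integer. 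For the generic case, specialising to $l=\lfloor \theta M /(2\pi)\rfloor$ and $l=\lceil \theta M/(2\pi)\rceil$ gives $\delta_l = \Delta_1$ and $\delta_l = -\Delta_2$ respectively, which reproduces the two explicit summands in the statement.

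The heart of the argument is the $8/\pi^2$ lower bound. The key observation is $\Delta_1+\Delta_2 = 1/M$: setting $a := \pi M \Delta_1$ and $b := \pi M \Delta_2$ yields $a+b=\pi$, hence $\sin a = \sin b$. Applying $\sin x \le x$ to each denominator and using this equality in the numerators collapses the inequality to
\[
{\rm Prob}\bigl(l=\lfloor \tfrac{\theta M}{2\pi}\rfloor\bigr) + {\rm Prob}\bigl(l=\lceil \tfrac{\theta M}{2\pi}\rceil\bigr) \;\ge\; \sin^2(a)\Big(\tfrac{1}{a^2}+\tfrac{1}{(\pi-a)^2}\Big).
\]
The main remaining obstacle is then the one-variable minimisation of $g(a) := \sin^2(a)\bigl(1/a^2+1/(\pi-a)^2\bigr)$ on $(0,\pi)$. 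The symmetry $g(\pi-a)=g(a)$ identifies $a=\pi/2$ as a critical point with $g(\pi/2)=8/\pi^2$, while the boundary limits satisfy $g(0^+)=g(\pi^-)=1 > 8/\pi^2$. A routine sign analysis of $g'$ (or comparing with the product of the decreasing function $\sin^2(a)/a^2$ on $(0,\pi/2]$ against its reflection) confirms that $a=\pi/2$ is the global minimum, giving the claimed bound.
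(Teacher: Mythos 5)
The paper does not actually prove this theorem: it is imported verbatim from reference [21] (Brassard--H\o yer--Mosca--Tapp), so the only meaningful comparison is against that standard argument --- and your proof is essentially that argument, correctly reconstructed. The identification $f_l=\langle S_M(l/M)|S_M(\theta_j/2\pi)\rangle$, the geometric-sum evaluation $|f_l|^2=\sin^2(\pi M\delta_l)/(M^2\sin^2(\pi\delta_l))$, the integer case ($\delta_{l_0}=0$ gives probability $1$ and every other outcome has $M\delta_l$ a nonzero integer), the matching of $\delta_l$ to $\Delta_1$ and $-\Delta_2$, and the reduction via $\Delta_1+\Delta_2=1/M$, $\sin x\le x$ in the denominators, and $\sin a=\sin(\pi-a)$ are all correct and are exactly the steps one finds in [21].

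The one place where you are claiming more than you show is the final minimisation of $g(a)=\sin^2(a)\bigl(1/a^2+1/(\pi-a)^2\bigr)$ on $(0,\pi)$. The claim ($\min g=g(\pi/2)=8/\pi^2$) is true, but it is not quite as ``routine'' as you suggest, and the obvious shortcuts fail: writing $g(a)=u(a)+u(\pi-a)$ with $u(x)=\sin^2(x)/x^2$, one cannot invoke Jensen because $u$ is \emph{not} convex on $(0,\pi)$ (e.g.\ $u(\pi/4)\approx 0.811 > \tfrac12(u(0)+u(\pi/2))\approx 0.703$), and the QM--AM bound $1/a^2+1/(\pi-a)^2\ge \pi^2/(2a^2(\pi-a)^2)$ reduces the claim to $\sin a\ge 4a(\pi-a)/\pi^2$, which is false at $a=\pi/4$. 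So you genuinely need the sign analysis of $g'(a)=u'(a)-u'(\pi-a)$, i.e.\ showing that the strictly decreasing function $u$ decreases faster at $a$ than at $\pi-a$ for $a<\pi/2$; this works but is a real computation, not a remark, and should either be written out or explicitly delegated to [21]. With that step supplied, the proof is complete.
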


Intuitively, quantum phase estimation works as follows. The eigenvalues of a unitary lie on a unit circle. Divide the circle into $M$ parts. QPE can only obtain an estimation of the eigenphases from set $\Omega_M = \{\frac{j}{M}2\pi,0 \le j \le M-1\}$. Hence, for eigenphase $\theta$ in Figure~\ref{theta}, the best estimation is $\theta^-$ or $\theta^+$ as depicted in Figure~\ref{theta}, which correspond to the measure result of $\lfloor \frac{\theta}{2\pi}M \rfloor$ or $\lceil \frac{\theta}{2\pi}M \rceil$. Theorem~\ref{probability} then states that the probability of obtaining this best estimation is at least $\frac{8}{\pi^2}$. Specially, in the case of $\theta \in \Omega_M$, one obtain the result $\theta$ with certainty.

The problem concerning QPE is that the input state $|\phi_0\rangle$ is not an eigenvector in general and one would not be able to determine which eigenvalue corresponds to the estimation result. Fortunately, we can circumvent this difficulty as equation~\ref{counting_Grover} does in the context of unstructured search. The key is that the spectrum of our evolutionary operator consists of a conjugate pair and -1. We evaluate $k$ by rewriting equation \eqref{P} for $\theta_{\pm}$: 
\begin{equation}
k = \Bigg(\frac{\sin\frac{\theta_{\pm}}{2}}{\sin(\frac{\sqrt{mn}}{2}t)}\Bigg)^2n
\end{equation}
If we take $t = t_0 = \frac{\pi}{\sqrt{mn}}$, then $\theta_{\pm} = 2\arcsin\sqrt{\frac{k}{n}} \le 2\arcsin\frac{1}{\sqrt{2}} = \frac{\pi}{2}$ provided that $k < \frac{n}{2}$, and
\begin{equation}\label{k}
k = \Big(\sin\frac{\theta_{\pm}}{2}\Big)^2n.
\end{equation}
Eq.~(\ref{k}) is the counterpart of eq.~(\ref{counting_Grover}) in the context of spatial search. The eigenphase distribution of $U(t_0)$ on the unit circle is shown in Figure \ref{phase}. The eigenvectors in \eqref{eigenvector} now become
\begin{equation}
\begin{aligned}
& |v_{-1}\rangle = -i|s\rangle,\\
& |v_{\pm}\rangle = \frac{\mp|w\rangle-i|\bar{w}\rangle}{\sqrt{2}}.
\end{aligned}
\end{equation}
Note that the uniform superposition state
\begin{equation}
|\psi\rangle = \frac{1}{\sqrt{m+n}}\Big( i\sqrt{m}|v_{-1}\rangle + (i\sqrt{n-k}-\sqrt{k})\frac{|v_+\rangle-|v_-\rangle}{\sqrt{2}} \Big).
\end{equation}
Our quantum counting algorithm on the complete bipartite graph is to apply QPE to the operator $U(t_0)$ starting with the state $|0\rangle^{\otimes p}|\psi\rangle$. With probability $1-\frac{m}{m+n}$, $|\psi\rangle$ will be in either $|v_+\rangle$ or $|v_-\rangle$, Theorem \ref{probability} ensures that we can obtain a good estimate of the eigenphases $\theta_{\pm}$ with high probability, from which we can evaluate $k$.

\begin{figure}[!t]
\centering
\begin{minipage}[c]{0.48\textwidth}
\centering
\includegraphics[width=6cm]{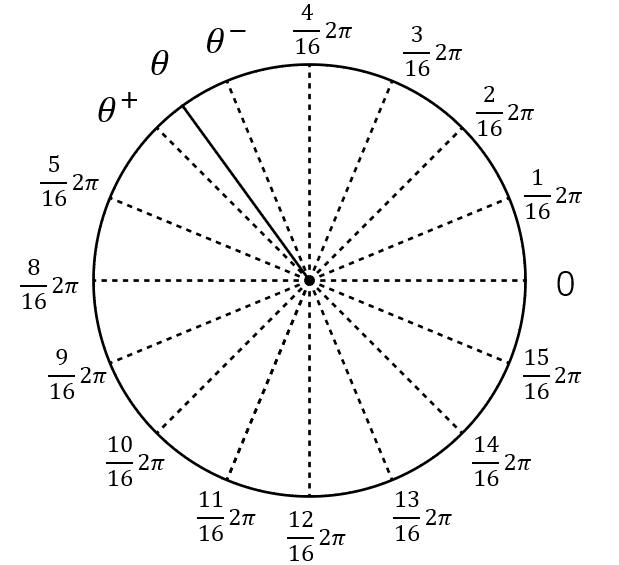}
\end{minipage}
\hspace{0.02\textwidth}
\begin{minipage}[c]{0.48\textwidth}
\centering
\includegraphics[width=6cm]{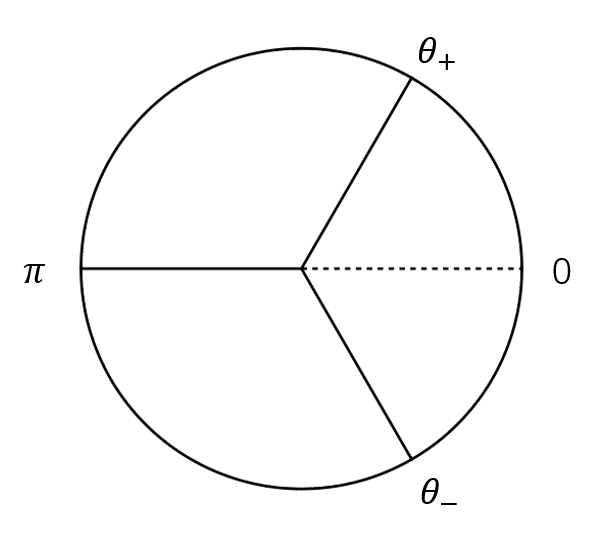}
\end{minipage}\\[3mm]
\begin{minipage}[t]{0.48\textwidth}
\centering
\caption{The possible phase set returned by Algorithm 1 for $M = 16$. The best estimation for an angle $\theta$ is either $\theta^+$ or $\theta^-$.}
\label{theta}
\end{minipage}
\hspace{0.02\textwidth}
\begin{minipage}[t]{0.48\textwidth}
\centering
\caption{The eigenphase distribution of $U(t_0)$ on the unit circle.}
\label{phase}
\end{minipage}
\end{figure}

\begin{theorem}[Quantum Counting on the Complete Bipartite Graph]
The error of the estimation $\tilde{k}$ of the number of marked vertices $k$ is upper bounded by
\begin{equation}
|\tilde{k}-k| \le \frac{2\pi}{M}\sqrt{k(n-k)}+\frac{\pi^2}{M^2}n
\end{equation}
with success probability at least $\frac{8}{\pi^2}$. If $\frac{\theta_{\pm}}{2\pi}M$ is an integer, then $\tilde{k}=k$ with certainty.
\end{theorem}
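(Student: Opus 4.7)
The plan is to run QPE on the operator $U(t_0)$ starting from $|0\rangle^{\otimes p}|\psi\rangle$, read out an estimate $\tilde\theta$ of one of the nontrivial eigenphases $\theta_\pm$, and then define $\tilde k := n\sin^2(\tilde\theta/2)$ by inverting Eq.~(\ref{k}). The accuracy of $\tilde k$ will then follow by transporting the phase-accuracy guarantee of Theorem~\ref{probability} through the smooth map $\theta\mapsto n\sin^2(\theta/2)$.

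The first step is to argue that QPE effectively sees one of the three eigenphases $\pi,\theta_+,\theta_-$. Expanding $|\psi\rangle$ in the eigenbasis of $U(t_0)$ as given just before the theorem, the amplitude along $|v_{-1}\rangle$ has squared magnitude $m/(m+n)$, so with the complementary probability $n/(m+n)$ the second register collapses into the $|v_\pm\rangle$ subspace and QPE is effectively performed on the eigenphase $\theta_+$ or $\theta_-$. Outcomes in the other branch produce an estimate near $\pi$, which is well separated from $\theta_\pm\in[0,\pi/2]$ (valid whenever $k<n/2$) and so can be identified and discarded. Conditioned on the $|v_\pm\rangle$ branch, Theorem~\ref{probability} gives $|\tilde\theta-\theta_\pm|\le 2\pi/M$ with probability at least $8/\pi^2$; the special case where $\theta_\pm M/(2\pi)$ is an integer forces $\tilde\theta=\theta_\pm$ exactly, which yields the claimed certainty statement.

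The remaining step is to convert the phase error into an error in $k$. Using the identity $\sin^2 x-\sin^2 y=\sin(x+y)\sin(x-y)$ with $x=\tilde\theta/2$ and $y=\theta_\pm/2$,
\begin{equation*}
\tilde k - k \;=\; n\,\sin\!\Big(\tfrac{\tilde\theta+\theta_\pm}{2}\Big)\,\sin\!\Big(\tfrac{\tilde\theta-\theta_\pm}{2}\Big).
\end{equation*}
The second factor has absolute value at most $\pi/M$. For the first, writing $\tfrac{\tilde\theta+\theta_\pm}{2}=\theta_\pm+\tfrac{\tilde\theta-\theta_\pm}{2}$, the angle-sum formula together with $\sin\theta_\pm=2\sqrt{k(n-k)}/n$ bounds it by $2\sqrt{k(n-k)}/n+\pi/M$. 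Multiplying the two estimates and collecting terms produces precisely the inequality stated in the theorem. The only real obstacle is arithmetic rather than conceptual: one must keep the trigonometric manipulation tight enough to reproduce the exact constants $(2\pi/M)\sqrt{k(n-k)}$ and $(\pi^2/M^2)n$, rather than the looser bounds that would follow from a naive Taylor expansion of $\sin^2$.
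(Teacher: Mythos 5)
Your proposal is correct and follows essentially the same route as the paper: invoke Theorem~\ref{probability} to get $|\tilde\theta-\theta_\pm|\le 2\pi/M$ with probability at least $8/\pi^2$, then propagate this through $k=n\sin^2(\theta/2)$ using exact trigonometric identities to recover the stated constants. The only cosmetic difference is that you use the product identity $\sin^2x-\sin^2y=\sin(x+y)\sin(x-y)$ where the paper expands $\sin^2(\tfrac{\theta}{2}\pm\tfrac{\epsilon}{2})-\sin^2(\tfrac{\theta}{2})$ additively; both yield exactly $\tfrac{2\pi}{M}\sqrt{k(n-k)}+\tfrac{\pi^2}{M^2}n$, and your extra remarks on discarding the $|v_{-1}\rangle$ branch match the discussion the paper places just before the theorem.
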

\begin{proof}
With probability $\frac{8}{\pi^2}$, $|\tilde{\theta}-\theta| \le \epsilon = \frac{2\pi}{M}$. Then $|\tilde{k}-k| = \Big|\Big(\sin\frac{\tilde{\theta}}{2}\Big)^2-\Big(\sin\frac{\theta}{2}\Big)^2\Big|n$ by \eqref{k}. Using trigonometric identities, we obtain
\begin{equation}
\begin{aligned}
\sin^2\Big( \frac{\theta}{2}+\frac{\epsilon}{2} \Big) - \sin^2\Big( \frac{\theta}{2} \Big) = & \cos\frac{\theta}{2}\sin\frac{\theta}{2}\sin\epsilon + \cos\theta\sin^2\frac{\epsilon}{2}, \\
\sin^2\Big( \frac{\theta}{2} \Big) - \sin^2\Big( \frac{\theta}{2}-\frac{\epsilon}{2} \Big) = & \cos\frac{\theta}{2}\sin\frac{\theta}{2}\sin\epsilon - \cos\theta\sin^2\frac{\epsilon}{2}.
\end{aligned}
\end{equation}
It follows that
\begin{equation}
|\tilde{k}-k| \le n\Big(\sqrt{1-\frac{k}{n}}\sqrt{\frac{k}{n}}\epsilon + \frac{\epsilon^2}{4}\Big) = \frac{2\pi}{M}\sqrt{k(n-k)}+\frac{\pi^2}{M^2}n.
\end{equation}
\end{proof}

\section{Circuit Implementation}\label{simulation}

The heart of circuit implementation of our algorithm lies at the implementation of operator ${\rm e}^{-iAt}$. We follow the method of diagonalization in \cite{22} to address the case where $m=2^{l_1}$ and $n=2^{l_2}$.

We first diagonalize $A = Q \Lambda Q^{\dagger}$ using its eigenbasis, with $\Lambda = \text{diag}( \sqrt{mn},0^{m-1},-\sqrt{mn},0^{n-1} )$ and
$$
Q = \begin{pNiceArray}{w{c}{1.25cm}ccc:w{c}{1cm}ccc}
\Block{4-4}<\Large>{H^{\otimes l_1}} &&& & &&& \\
&&& & &&& \\
&&& & &&& \\
&&& & &&& \\ 
\hdottedline
&&& & \Block{4-4}<\Large>{H^{\otimes l_2}} &&& \\
&&& & &&& \\
&&& & &&& \\
&&& & &&& \\
\end{pNiceArray}
\begin{pNiceArray}{cw{c}{0.75cm}cc:cw{c}{0.5cm}cc}
\frac{1}{\sqrt{2}} & && & \frac{1}{\sqrt{2}} &&& \\
                   & \Block{3-3}<\large>{I_{m-1}} && & &&& \\
                   & && & &&& \\
                   & && & &&& \\
\hdottedline
\frac{1}{\sqrt{2}} &&& & -\frac{1}{\sqrt{2}} & && \\
                   &&& &                     & \Block{3-3}<\large>{I_{n-1}} && \\
                   &&& &                     & && \\
                   &&& &                     & &&
\end{pNiceArray},
$$
where H is the Hadamard gate.

Here we expand the Hilbert space by $d$ dimensions so that $m = n+d$. Define $\widetilde{\Lambda} = \text{diag}( \sqrt{mn},0^{m-1},\\-\sqrt{mn},0^{m-1} )$ and
\begin{equation}
\widetilde{Q} = 
\begin{pNiceArray}{w{c}{1.25cm}ccc:w{c}{0.75cm}cc:w{c}{0.5cm}}
\Block{4-4}<\Large>{H^{\otimes l_1}} &&& & &&& \\
&&& & &&& \\
&&& & &&& \\
&&& & &&& \\ 
\hdottedline
&&& & \Block{3-3}<\Large>{H^{\otimes l_2}} && & \\
&&& & && & \\
&&& & && & \\
\hdottedline
&&& & && & I_d
\end{pNiceArray}
\begin{pNiceArray}{cw{c}{0.75cm}cc:cw{c}{0.5cm}cc}
\frac{1}{\sqrt{2}} & && & \frac{1}{\sqrt{2}} &&& \\
                   & \Block{3-3}<\large>{I_{m-1}} && & &&& \\
                   & && & &&& \\
                   & && & &&& \\
\hdottedline
\frac{1}{\sqrt{2}} &&& & -\frac{1}{\sqrt{2}} & && \\
                   &&& &                     & \Block{3-3}<\large>{I_{m-1}} && \\
                   &&& &                     & && \\
                   &&& &                     & &&
\end{pNiceArray}.
\end{equation}
Then for 
\begin{equation}
\widetilde{A} = \widetilde{Q}\widetilde{\Lambda}\widetilde{Q}^{\dagger},
\end{equation}
it is easily seen that
\begin{equation}
{\rm e}^{-i\widetilde{A}t} = \widetilde{Q}{\rm e}^{-i\widetilde{\Lambda}t}\widetilde{Q}^{\dagger}
\end{equation}
acts as ${\rm e}^{-iAt}$ in the original space and as identity in the expended subspace. Equations (\ref{Q_1}-\ref{e}) show how to implement these operators by elementary gates with $P_0=|0\rangle\langle0|$ and $P_1=|1\rangle\langle1|$ being the 2-dimensional projection operators.

\begin{equation}\label{Q_1}
\begin{aligned}
\begin{pNiceArray}{cw{c}{0.75cm}cc:cw{c}{0.75cm}cc}
\frac{1}{\sqrt{2}} & && & \frac{1}{\sqrt{2}} &&& \\
                   & \Block{3-3}<\large>{I_{m-1}} && & &&& \\
                   & && & &&& \\
                   & && & &&& \\
\hdottedline
\frac{1}{\sqrt{2}} &&& & -\frac{1}{\sqrt{2}} & && \\
                   &&& &                     & \Block{3-3}<\large>{I_{m-1}} && \\
                   &&& &                     & && \\
                   &&& &                     & &&
\end{pNiceArray} & = 
I_{2^{l_1+1}} + 
\begin{pNiceArray}{cw{c}{0.7cm}cc:cw{c}{0.7cm}cc}
\frac{1}{\sqrt{2}}-1 & && & \frac{1}{\sqrt{2}} &&& \\
                     & \Block{3-3}<\large>{\bf0} && & &&& \\
                     & && & &&& \\
                     & && & &&& \\
\hdottedline
\frac{1}{\sqrt{2}} &&& & -\frac{1}{\sqrt{2}}-1 & &&   \\
                   &&& &                       & \Block{3-3}<\large>{\bf0} && \\
                   &&& &                       & && \\
                   &&& &                       & &&
\end{pNiceArray} \\
& = I_{2^{l+1}} + (H-I_2) \otimes P_0^{\otimes l_1}
\end{aligned}
\end{equation}
\begin{equation}\label{Q_2}
\begin{aligned}
\begin{pNiceArray}{w{c}{0.75cm}ccc:w{c}{0.5cm}cc:w{c}{0.45cm}}
\Block{4-4}<\large>{H^{\otimes l_1}} &&& & && & \\
&&&                                  & && & \\
&&&                                  & && & \\
&&&                                  & && & \\
\hdottedline
&&& & \Block{3-3}<\large>{H^{\otimes l_2}} && & \\
&&& & && & \\
&&& & && & \\
\hdottedline
&&& & && & I_d
\end{pNiceArray} & = I_{2^{l_1+1}} + 
\begin{pNiceArray}{w{c}{0.75cm}ccc:w{c}{0.75cm}cc:w{c}{0.25cm}}
\Block{4-4}{H^{\otimes l_1}-I_{2^{l_1}}} &&& & &&& \\
&&&                                  & &&& \\
&&&                                  & &&& \\
&&&                                  & &&& \\
\hdottedline
&&& & && & \\
&&& & && & \\
&&& & && & \\
\hdottedline
&&& & && & 
\end{pNiceArray} + 
\begin{pNiceArray}{w{c}{0.5cm}ccc:w{c}{1cm}cc:w{c}{0.25cm}}
&&& & && & \\
&&& & && & \\
&&& & && & \\
&&& & && & \\
\hdottedline
&&& & \Block{3-3}{H^{\otimes l_2}-I_{2^{l_2}}} && & \\
&&& & && & \\
&&& & && & \\
\hdottedline
&&& & && &
\end{pNiceArray} \\
& = I_{2^{l_1+1}} + P_0 \otimes (H^{\otimes l_1}-I_{2^{l_1}}) + P_1 \otimes P_0^{\otimes (l_1-l_2)} \otimes (H^{\otimes l_2}-I_{2^{l_2}})
\end{aligned}
\end{equation}
\begin{equation}\label{e}
{\rm e}^{-i\widetilde{\Lambda} t} = 
\begin{pNiceArray}{cw{c}{0.75cm}cc:cw{c}{0.75cm}cc}
{\rm e}^{-i\sqrt{mn}t} & &&                              & &&& \\
                       & \Block{3-3}<\large>{I_{m-1}} && & &&& \\
                       & &&                              & &&& \\
                       & &&                              & &&& \\
\hdottedline
                   &&&   & {\rm e}^{i\sqrt{mn}t} & &&   \\
                   &&&   &                     & \Block{3-3}<\large>{I_{m-1}} && \\
                   &&&   &                     & && \\
                   &&&   &                     & &&
\end{pNiceArray} = 
\begin{pmatrix}
{\rm e}^{-i\sqrt{mn}t} & \\
                       & {\rm e}^{i\sqrt{mn}t}
\end{pmatrix} \otimes P_0^{l_1}
\end{equation}

The circuit for ${\rm e}^{-i\widetilde{A}t}$ is given in Figure~\ref{circuit}. As pointed out in~\cite{22}, the diagonalization approach confines the time-dependent of ${\rm e}^{-i\widetilde{A}t}$ to the diagonal matrix ${\rm e}^{-i\widetilde{\Lambda}t}$, which can be implemented by a controlled phase gate. This means that the walk operator can be simulated efficiently in constant time, that is, the complexity of the quantum circuit does not scale with the parameter $t$.

For $m,n$ that are not a power of 2, we can simply add vertices to the search space, none of which are marked, such that the sizes of the two partite sets are a power of 2. Alternatively, one can employ the oracle Hamiltonian simulation algorithm. Assuming access to a unitary oracle that implement the adjacency matrix, the implementation of the quantum walk operator can be achieved through sparse Hamiltonian simulation algorithm~\cite{23}.  

\begin{figure*}[ht]
  \centering
  \includegraphics[width=10cm]{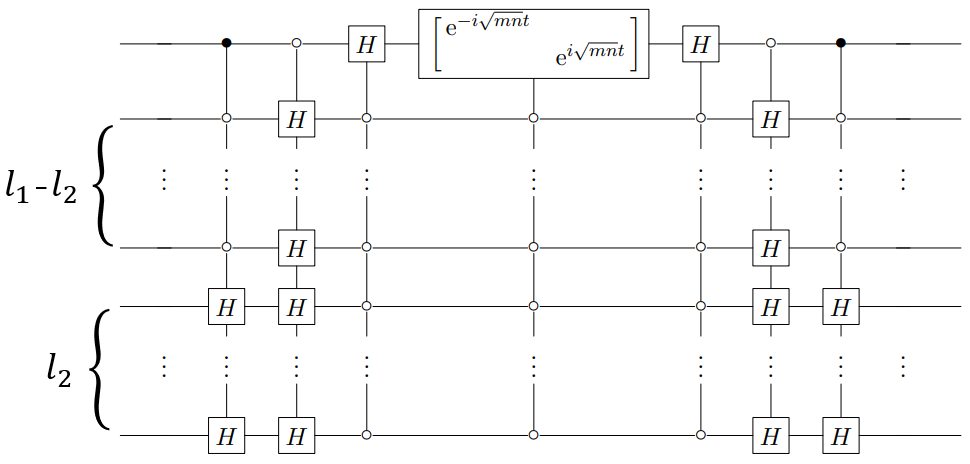}
  \caption{Quantum circuit for implementation of ${\rm e}^{-i\widetilde{A}t}$.}\label{circuit}
\end{figure*}

\section{Conclusion and Discussion}\label{discussion}

In this work we give a deterministic search algorithm on the complete bipartite graph. This is a new type of graphs that supports a deterministic search algorithm, other than complete graphs and $2 \times n$ Rook graph. Future work would be to construct deterministic search algorithms on other types of graphs. As we pointed out in Introduction and as the subsequent calculation shown, our algorithm strongly depends on the symmetries of the graph. Deterministic algorithm for more general graphs might require more sophisticated search scheme.

We also provide a simple algorithm for the quantum counting problem of spatial search based on our search operator. Quantum counting is another difficulty one might encounter in constructing deterministic search algorithms on graphs. Typically the algorithmic parameters such as iteration number or evolution time are functions of the number of marked vertices. However, the eigenphases of the evolution operator usually do not have sufficient symmetries, such that QPE cannot be used to extract the information of number of marked vertices from the operator eigenphases. In recent years, there are research works that have constructed quantum counting (and the closely related problem of amplitude estimation) algorithms without QPE \cite{24,25,26,27}. It is an interesting research line to construct quantum counting algorithms without QPE on graphs.






\end{document}